\documentclass[conference]{IEEEtran}
\IEEEoverridecommandlockouts
% The preceding line is only needed to identify funding in the first footnote. If that is unneeded, please comment it out.
\usepackage{cite}
\usepackage{amsthm, amsmath,amssymb,amsfonts,mathrsfs,bm,bbm}
\usepackage{algorithmic}
\usepackage{graphicx}
\usepackage{textcomp}
\usepackage[dvipsnames]{xcolor}
\def\BibTeX{{\rm B\kern-.05em{\sc i\kern-.025em b}\kern-.08em
    T\kern-.1667em\lower.7ex\hbox{E}\kern-.125emX}}

\newtheorem{theorem}{Theorem}
\newtheorem{proposition}{Proposition}

\newcommand{\Fc}{\mathcal{F}}

\newcommand{\Lc}{\mathcal{L}}

\newcommand{\Qc}{\mathcal{Q}}

\newcommand{\Xc}{\mathcal{X}}

\newcommand{\xv}{\vec{x}}

\newcommand{\pv}{\vec{p}}

\newcommand{\Pb}{\mathbb{P}}

\newcommand{\Eb}{\mathbb{E}}

\renewcommand{\vec}[1]{\mbox{\boldmath$#1$}}

\usepackage{xparse}
\NewDocumentCommand{\ml}{m O{M} O{Q_n^{[M]}}} {\mathcal{L}_{#1}(#2\rightarrow #3)}
\newcommand{\cmb}[2]{\begin{pmatrix}#1\\#2\end{pmatrix}}
\newcommand{\pdv}[2]{\frac{\partial #1}{\partial #2}}
\newcommand{\Ls}{\mathscr{L}}

%%% Table
\usepackage{tabularx}
\usepackage{diagbox}
\usepackage{lipsum}
\usepackage{multirow}
\usepackage{subcaption}
\usepackage{hhline}

\usepackage{hyperref}

\begin{document}

\title{\huge Improved Weakly Private Information Retrieval Codes\thanks{This work was supported in part by the National Science Foundation via Grants CCF-20-07067.}
}

\author{
		\IEEEauthorblockN{Chengyuan Qian, Ruida Zhou, Chao Tian, and Tie Liu}
		\IEEEauthorblockA{Department of Electrical and Computer Engineering, Texas A\&M University\\
			{\sffamily \{cyqian, ruida, chao.tian,tieliu\}@tamu.edu}
		}
		%\thanks{$^*$ indicates co-first authors in alphabetic order.}
		%\thanks{This work was supported in part by the National Science Foundation under Grant CCF-18-16546.}
	}

\textfloatsep=0.2cm
\intextsep=0.2cm
\abovecaptionskip=0.1cm
\belowcaptionskip=0.1cm

\abovedisplayskip=0.12cm
\belowdisplayskip=0.12cm 

\maketitle
\begin{abstract}
We study the problem of weakly private information retrieval (W-PIR), where a user wishes to retrieve a desired message from $N$ non-colluding servers in a way that the privacy leakage regarding the desired message's identity is less than or equal to a threshold. We propose a new code construction which significantly improves upon the best known result in the literature, based on the following critical observation. In  previous constructions, for the extreme case of minimum download, the retrieval pattern is to download the message directly from $N-1$ servers; however this causes leakage to all these $N-1$ servers, and a better retrieval pattern for this extreme case is to download the message directly from a single server. The proposed code construction allows a natural transition to such a pattern, and for both the maximal leakage metric and the mutual information leakage metric, significant improvements can be obtained. We provide explicit solutions, in contrast to a previous work by Lin et al., where only numerical solutions were obtained.
\end{abstract}

\section{Introduction}

The study of \emph{private information retrieval} (PIR) systems \cite{Chor1995} was motivated by the practical need of protecting privacy during information retrieval. 
In the canonical PIR setting, a user wishes to retrieve a message from $N$ servers, each keeping a copy of all $K$ messages. The servers are non-colluding, i.e., they cannot communicate with each other. The user wishes to ensure that the servers can infer no information about the identity of the desired message. Since the message is usually quite large, the dominant communication cost is the download from the servers.
The highest possible information bits per downloaded bit is referred to as the PIR capacity, which was recently fully characterized by Sun and Jafar \cite{Sun2017}. 
An alternative optimal code (referred to as the TCS code) was later proposed \cite{Tian2019}, which uses the minimum possible message length and query set. Many variations of the canonical PIR problem has been studied, such as colluding servers\cite{t1,t2,t3}, storage constrained \cite{c1,c2,c3,c4,c5,c6,c7,c8,c9,c10,zhu2019new}, with symmetric privacy requirement\cite{d1,d2,d3}, and with side information\cite{s1,s2,s3,s4,s5,s6,s7}. 

The perfect privacy requirement in the canonical setting can be unnecessarily stringent. A small amount of privacy leakage is likely acceptable in many practical scenarios, e.g., when the user does not mind if the server can infer the identity of the desired message with only relatively low confidence. 
This setting, where a weaker privacy constraint is placed, is referred to as weakly private information retrieval (W-PIR) \cite{Asonov2002,Toledo2016,Samy2019,ZhuqingJia2019,Lin2019,Zhou2020a,Lin2021,Samy2021,Lin2020}. 
In exchange for the loss of privacy, a higher retrieval rate can be attained, sometimes with a lower computational complexity \cite{Asonov2002}. Several different metrics have been proposed to measure the privacy leakage in W-PIR. Differential privacy was used in \cite{Toledo2016,Samy2019}, conditional entropy was used in \cite{ZhuqingJia2019}, mutual information in \cite{Lin2019}, and the maximal leakage metric (see \cite{Issa2020}) was adopted in  \cite{Zhou2020a,Lin2021}. %This metric has the advantage of a clear information-theoretic operational meaning, and more importantly, it is agnostic to the message retrieval probability distribution. 
The W-PIR code proposed in \cite{Zhou2020a} for the maximal leakage metric was obtained by adjusting the code proposed in \cite{Tian2019}; similar or identical code constructions were also analyzed in \cite{Lin2021} and \cite{Samy2021} under different metrics, either theoretically or numerically. 

The previously best known W-PIR code under the maximal leakage constraint \cite{Zhou2020a} was obtained by breaking the uniform distribution on the retrieval patterns in the TSC code, which increasingly favors the direct download pattern in the code as the privacy requirement is relaxed; it was shown to be optimal when $N=2$ in \cite{Lin2021}. In this work, we provide a new code construction by making the following critical observation. The direct download pattern in the TSC code essentially downloads the desired message from $N-1$ servers, one symbol from each server. However, this would result in privacy leakage to all these $N-1$ servers, when this pattern is not mixed with other patterns of retrieval. In the extreme case of minimum download, the W-PIR code in \cite{Zhou2020a} can only use this pattern, yet an alternative strategy is to directly download the full message from {\em one single server}, which only leaks to this single server. Our proposed new code utilizes this observation and allows a natural transition to this retrieval pattern.

The new code can also be viewed as adjusting the probabilities of the retrieval patterns in the TSC code, jointly with the new clean download retrieval pattern. We provide the optimal distributions explicitly under both the maximal leakage metric and the mutual information leakage metric, and show that the new code can achieve significant improvement over existing ones. It should be noted that in  \cite{Lin2021}, the simpler code without the new retrieval pattern was studied only numerically, and no explicit solution was provided.  

\section{Preliminaries}\label{sec:preliminaries}

In this section, we formally introduce the W-PIR problem under the maximum leakage metric and mutual information metric, respectively, and then review the PIR code proposed in \cite{Tian2019} that will be instrumental later on.

\subsection{Information Retrieval Systems}

There are a total of $N$ servers, and each server stores an independent copy of $K$ mutually independent messages, denoted as $W_{1:K} := (W_1, W_2, \ldots, W_K)$, where $K\geq 2$ without loss of generality. Each message consists of $L$ symbols, and each symbol is distributed uniformly in a finite set $\Xc$, which implies that
\begin{align*}
    L := H(W_1) = H(W_2) = \cdots = H(W_K),
\end{align*}
where the entropy is taken under the logarithm of base $|\Xc|$. The $i$-th symbol of the message $W_k$ is denoted as $W_k[i]$, where $i=1,\ldots,L$. An information retrieval code consists of the following component functions. When a user wishes to retrieve a message $W_k$, $k \in [1:K]$, the (random) query $Q^{[k]}_n$ sent to server-$n$ is generated according to an encoding function
\begin{align}
    Q^{[k]}_n := \phi_n(k, F), \quad \forall n \in 1:N,
\end{align}
by leveraging some private random key $F \in \Fc$. Let $\Qc_n$ be the union of all possible queries $Q^{[k]}_n$ over all $k$. For each $n \in 1:N$, upon receiving a query $q \in \Qc_n$, server-$n$ responds with an answer $A^{(q)}_n$ produced as
\begin{align}
    A^{(q)}_n := \varphi_n(q, W_{1:K}), \label{eqn:varphi}
\end{align} 
which is represented by $\ell_n^{(q)}$ symbols in certain coding alphabet $\mathcal{Y}$; to simplify the notation, we assume $\mathcal{X}=\mathcal{Y}$ in this work. We assume $\ell_n^{(q)}$ may vary according to the query but not the messages, and as such the user knows how many symbols are expected in that answer. 

For notational simplicity, we denote $A^{(Q^{[k]})}_n$ as $A^{[k]}_n$ and $\ell_n^{(Q^{[k]})}$ as $\ell_n^{[k]}$, both of which are random variables. With the answers from the servers, the user attempts to recover the message $\hat{W}_k$ using the decoding function
\begin{align}
    \hat{W}_k := \psi( A^{[k]}_{1:N}, k, F).
\end{align}
A valid information retrieval code must first satisfy $\hat{W}_k = W_k$, i.e., the desired message should be correctly recovered.

We measure the download cost by the normalized (worst-case) \emph{average download cost},
\begin{align}
    D := \max_{k \in 1:K} \Eb\left[\frac{1}{L} \sum_{n = 1}^N \ell_n^{[k]} \right],
\end{align}
where $\ell_n^{[k]}$ is the length of the answer in the code and the expectation is taken with respect  to the random key $F$.

\subsection{Maximal Leakage and Mutual Information Leakage}
The index of the desired message, denoted as $M$, is viewed as a random variable following a certain distribution. The identity of the desired message $W_M$ may be leaked to server-$n$ due to the query $Q^{[M]}_n$ sent by the user. We focus on two metrics to study this leakage. 

\vspace{0.1cm}
\noindent{\it The maximal leakage metric $\Lc(M \rightarrow Q^{[M]}_n)$: } It was shown in \cite{Issa2020} and \cite{Zhou2020a} that
\begin{align}
	\Lc(M\rightarrow Q_n^{[M]}) = \log_2\bigg{(}\sum_{q \in \Qc_n} \max_{k \in 1:K} \Pb(Q^{[k]}_n = q )\bigg{)}, \label{eq:max-leakage-def}
\end{align}
which in fact does not depend on the probability distribution of $M$. When $\Lc(M\rightarrow Q_n^{[M]})$ is large, $Q^{[M]}_n$ leaks more information of $M$ in the sense that server-$n$ can estimate $M$ more accurately; on the other hand, when $\Lc(M\rightarrow Q_n^{[M]})=0$, the retrieval is private in the sense that the distribution of $Q^{[k]}_n$ and $Q^{[k']}_n$ are identical for any $k,k'\in [1:K]$.  

A \emph{valid} code for W-PIR with $K$ messages and $N$ servers under a maximum leakage constraint $\rho$ is a collection of functions $(\{\phi_n\}_{n \in [1:N]},\{\varphi_n\}_{n \in [1:N]},\psi)$ that can correctly retrieve the desired message, and additionally satisfies the privacy constraints that for each server-$n$, 
    \begin{align}
        \Lc(M \rightarrow Q^{[M]}_n) \leq \rho,\quad n\in [1:N].\label{eq:rhodef}
    \end{align}
A download cost $D$ is called achievable for $\rho$, if there exists a valid code such that its download cost is less than or equal to $D$ for privacy constraint $\rho$. The closure of the collection of such $(\rho,D)$ pairs is called the achievable $(\rho,D)$ region under the maximal leakage constraint, denoted by $\mathcal{G}_{\text{MaxL}}$; the infimum of such achievable download cost for $\rho$ is the download-leakage function, denoted as $D_{\text{MaxL}}(\rho)$. 

\vspace{0.1cm}
\noindent{\it The mutual information metric:} In this setting, the identity of the request of message $M$ is assumed to be uniformly distributed in the set $[1:K]$. Then the mutual information leakage is 
\begin{align}
	\text{MI}(M\rightarrow Q_n^{[M]}) := I(M;Q_n), \label{eq:MI}
\end{align}
where $Q_n$ is the random query induced jointly by the random key $F$ and the random message index $M$. 
We can similarly define valid codes under the mutual information leakage constraint under the condition, 
    \begin{align}
        \text{MI}(M \rightarrow Q^{[M]}_n) \leq \rho,\quad n\in [1:N].\label{eq:rhoMIdef}
    \end{align}
Similarly the achievable $(\rho,D)$ region under this metric is denoted as $\mathcal{G}_{\text{MI}}$ and the download-leakage function as $D_{\text{MI}}(\rho)$. 

\subsection{The TSC Code}
The TSC code given in \cite{Tian2019} will serve an instrumental role  in this work. In this code, the message length $L = N-1$. A dummy symbol $W_k[0]=0$ is prepended at the beginning of all messages. In order to better facilitate the new code construction, we give a variation of the original construction, which can be viewed as probabilistic sharing among the cyclic permutations (over the $N$ servers)  of the PIR code in \cite{Tian2019}. 

Let the random key $F^*$ be a length-$K$ vector 
\begin{align}
    F^*: = (F^*_1,F^*_2,\dots,F^*_{K-1}, U),
\end{align} 
where $F^*_1,\dots,F^*_{K-1}, U$ are independent random variables uniformly distributed over the set $[0:N-1]$, i.e.,
\begin{align}
    F^* \in [0:N-1]^K \triangleq \Fc^*.
\end{align}
The query $Q_n^{[k]}$ to server-$n$ is generated by the function $\phi_n^*(k, F^*)$ defined as,
\begin{equation}
\begin{aligned}
     \phi_n^*(k, F^*) \triangleq (F^*_1, F^*_2, &\dots,F^*_{k-1}, (U+n)_N, \\
     & F^*_{k}, F^*_{k+1}, \dots, F^*_{K-1}), \label{eqn:tsc-phi}
\end{aligned}
\end{equation}
where $(\cdot)_N$ represents the modulo $N$ operation. Note that in \cite{Tian2019}, the indices $n$ and $k$ start from $0$, and the random variable $U$ here is instead $-\sum_{j = 1}^{K-1}F_j$. The new random key component $U$ is introduced here, such that the query to the server $n$ is cyclically permuted uniformly at random. As a result, for any server-$n$, the query $Q_n^{[k]}$ is uniformly distributed on the set $\Qc^* \triangleq [0:N-1]^K$, i.e.,
\begin{align}
    \Pb(Q_n^{[k]} = q) = N^{-K}, \quad \forall n\in[1:N], q\in\Qc^*. \label{eq:tscq}
\end{align}
Upon receiving this query, the server-$n$ returns the answer $A_n^{[k]}$ generated by the function $\varphi^*(q,W_{1:K})$,
\begin{align}
    \varphi^*(q,W_{1:K})& \triangleq W_1[Q_{n,1}^{[k]}]\oplus W_2[Q_{n,2}^{[k]}]\oplus \cdots \oplus W_K[Q_{n,K}^{[k]}]\notag\\
    & = W_k[(U+n)_N] \oplus \mathscr{I},
\end{align}
where $\oplus$ denotes addition in the given finite field, $Q_{n,m}^{[k]}$ represents the $m$-th symbol of $Q_{n}^{[k]}$, and $\mathscr{I}$ is the interference signal defined as
\begin{equation}
    \begin{split}
        \mathscr{I} = W_1[F_1]\oplus\cdots&\oplus W_{k-1}[F_{k-1}] \\
        &\oplus W_{k+1}[F_{k}] \oplus \cdots \oplus W_{K}[F_{K-1}].
    \end{split}
\end{equation}
Since there exists an $n \in [1:N]$, $(U+n)_N = 0$ such that $A_n^{[k]} = \mathscr{I}$, the user can retrieve the desired message $W_k$ by subtracting $\mathscr{I}$ from $A_{n'}^{[k]}$ for all $n' \not= n$. Note that with probability $N^{-(K-1)}$ the interference signal $\mathscr{I}$ consists of only dummy symbols and need not to be downloaded, in which case a direct download will be performed by retrieving the desired message from $N-1$ servers, one symbol per server. The download cost is therefore 
\begin{align}
    D^* = \frac{N}{N-1}\left(1-\frac{1}{N^{K-1}}\right) + \frac{1}{N^{K-1}} =\frac{ 1 - N^{-K} }{ 1 - N^{-1} },
\end{align}
matching the capacity result given in \cite{Sun2017}. An example of the code (with adjusted probabilities for W-PIR) is given in Section \ref{sec:forward}; more details can be found in \cite{Tian2019}.

\section{Main Results} \label{sec:main}

We summarize the performance of the new code in the following two theorems.

\begin{theorem}\label{thm:achievableML} For W-PIR under the maximal leakage constraint,  
\begin{align}
    &D_{\text{MaxL}}(\rho) \notag\\
    &\leq  1 + 
     \left( 1 - N \frac{2^{\rho} - 1}{K - 1} \right)_+ \left(\frac{1}{N} + \cdots + \frac{1}{N^{K-1}} \right),\label{eqn:def-D*}
\end{align}
where $(x)_+:=\max(x,0)$. 
\end{theorem}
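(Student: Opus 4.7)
The plan is to construct a W-PIR scheme as a time-sharing mixture of two component codes --- the TSC code of Section~\ref{sec:preliminaries} and a new ``clean'' pattern that downloads the entire desired message from a single server --- indexed by a mixing parameter $\alpha\in[0,1]$, and then to optimize $\alpha$ against the leakage constraint. With probability $1-\alpha$ the user runs the TSC code with its uniform query distribution \eqref{eq:tscq}; with probability $\alpha$ the user samples $n^*\in[1:N]$ uniformly and sends the symbolic query ``return $W_k$'' (denoted by $k\in[1:K]$) to server-$n^*$ and a null query (symbol $0$) to every other server. The per-server query alphabet becomes $\Qc_n = \Qc^*\cup\{0,1,\ldots,K\}$, with the two pieces disjoint by construction, and both component schemes individually recover $W_k$, so the mixture is a valid code.

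Next I would compute the two quantities of interest via \eqref{eq:max-leakage-def}: the summand $\max_k\Pb(Q_n^{[k]}=q)$ equals $(1-\alpha)/N^K$ on each of the $N^K$ TSC queries by \eqref{eq:tscq}, equals $\alpha/N$ on each symbolic query ``return $W_{k'}$'' (the max being attained at $k=k'$), and equals $\alpha(N-1)/N$ on the null query (whose probability is independent of $k$). Summing gives $1+\alpha(K-1)/N$, so
\begin{align*}
\Lc(M\rightarrow Q_n^{[M]}) = \log_2\!\bigl(1+\alpha(K-1)/N\bigr),
\end{align*}
independent of $n$. For the download cost, the clean pattern sends $L=N-1$ symbols (normalized cost $1$), while TSC costs $D^*=(1-N^{-K})/(1-N^{-1})=1+\sum_{j=1}^{K-1}N^{-j}$; averaging yields $D=1+(1-\alpha)\sum_{j=1}^{K-1}N^{-j}$.

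Finally, I set $\alpha=\min\bigl\{1,\,N(2^\rho-1)/(K-1)\bigr\}$, the largest value respecting the leakage constraint. When $N(2^\rho-1)/(K-1)\geq 1$ the cap $\alpha=1$ yields $D=1$ (pure clean download); otherwise substitution reproduces exactly the bound \eqref{eqn:def-D*}, with the $(\cdot)_+$ operator capturing this saturation. There is no single technically difficult step; the real work is bookkeeping the augmented query alphabet carefully so that the TSC queries remain uniform and disjoint from the new clean queries, after which the leakage and download calculations are direct. The essential insight, already highlighted in the introduction, is that adding $K+1$ new query symbols with combined leakage weight $\alpha(K+N-1)/N$ buys the full cost reduction from $D^*$ down to $1$, precisely because each clean query leaks to only the \emph{single} chosen server rather than to $N-1$ servers as in TSC's direct-download branch.
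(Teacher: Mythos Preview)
Your proposal is correct and follows essentially the same approach as the paper: mix the uniformly-distributed TSC code with the new single-server clean download, compute the maximal leakage by summing $\max_k\Pb(Q_n^{[k]}=q)$ over the enlarged query alphabet, and choose the mixing weight to saturate the leakage budget. Two cosmetic differences are worth noting. First, the paper parametrizes the construction with the full vector $(p'_0,p_0,\ldots,p_{K-1})$ and then argues (deferring to \cite{Zhou2020a}) that the optimum has all $p_w$ equal, i.e., the TSC component is used with its original uniform distribution; you bypass this step by writing the scheme directly as a $(1-\alpha)/\alpha$ mixture, which already sits at that optimum. Second, the paper sends the all-zero vector $\underline{0_K}$ to the non-chosen servers during clean download, reusing an existing TSC query, whereas you introduce a fresh null symbol disjoint from $\Qc^*$; since both the TSC all-zero query and your null symbol have probabilities independent of $k$, merging or separating them does not change $\sum_q\max_k\Pb(Q_n^{[k]}=q)$, and the leakage and download expressions coincide (with $\alpha=Np'_0$).
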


The code construction for Theorem \ref{thm:achievableML} is given in the next section, which is obtained by probabilistic sharing of the TSC code with the new retrieval pattern. This result is presented in terms of the download-leakage function for the maximal leakage setting. 
For the mutual information metric setting, a few additional quantities are required to parametrize the solution.  First define a sequence $\vec{x}=(x_1,x_,\ldots,x_{K-1})$, which can be shown to be greater than or equal to 1 component-wise, using the following recursion backwards from $K-1,K-2,\ldots,1$: 
\begin{align}
&\log\frac{(K-i)x_{K-i}+i}{K} = \sum_{j=0}^{i-1}(1-N)^j \log\frac{(K-1)x_{K-1}+1}{K} \notag\\
&\,\,\quad\qquad\qquad\qquad\qquad-\sum_{j=1}^{i-1}(1-N)^j \log x_{K-i+j}.\label{eq:recursive2}
\end{align}
Since the RHS only depends on $x_{K-1},x_{K-2},\ldots,x_{K-i+1}$, the sequence is well defined, when $x_{K-1}\in[1,\infty]$ is specified. With this sequence defined, we further define the following probability vector $\pv = (p_0,p_1,\dots,p_{K-1})$:
  \begin{align}
    &p_0(\xv) = \bigg( N + N\sum_{w=1}^{K-1}\cmb{K-1}{w}(N-1)^w\prod_{j=1}^{w}\frac{1}{x_{j}}\bigg)^{-1},\label{eq:pvdef1}\\
    &p_w(\xv) = p_0(\xv) \prod_{j=1}^{w}\frac{1}{x_{j}}, \quad w\in1:K-1.\label{eq:pvdef2}
\end{align}
It can be verified that $\pv$ induces a probability distribution with the appropriate combinatorial coefficients taken into account. 

Define the region $\hat{\mathcal{G}}_{MI}$ to be the nonnegative $(\rho,D)$ pairs satisfying the following conditions
\begin{align}
    &\rho\geq \frac{1}{K}\sum_{w=0}^{K}\cmb{K}{w}(N-1)^w\notag \\
    &\,\,\Big\{ wp_{w-1} \log p_{w-1} + (K-w)p_w \log p_w\label{eq:rhopara} \\
    &\,\, - [wp_{w-1} + (K-w)p_w] \log \frac{wp_{w-1} + (K-w)p_w}{K}\Big\}\notag\\
    & D\geq \frac{N}{N-1} (1 - p_0),\label{eq:pdpara}
\end{align}
for some $x_{K-1}\in[1,\infty]$; for simplicity, $p_{-1}$ and $p_K$ are defined as zero. We then have the following theorem. 

\begin{theorem}\label{thm:achievableMI} For the mutual information leakage metric, $\text{conv}(\hat{\mathcal{G}}_{\text{MI}}\cup \{(\frac{\log K}{N},1)\})\subseteq \mathcal{G}_{\text{MI}}$, where $\text{conv}(\cdot)$ is the convex hull operation.
\end{theorem}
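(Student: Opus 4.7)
The plan is to achieve every point in $\hat{\Gc}_{\text{MI}}$ through a biased version of the TSC code, achieve $(\log K/N, 1)$ via a clean single-server download, and invoke time-sharing to obtain the convex hull.

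For the biased TSC family, keep $U$ uniform on $[0:N-1]$ and replace the uniform distribution on $(F^*_1,\dots,F^*_{K-1})$ by a joint distribution under which, for every request $k$ and every server $n$, $\Pb(Q_n^{[k]}=q)$ depends only on the weight $w$ of the $K-1$ non-$k$ coordinates of $q$; denote this probability by $p_w$. The parametrization in (41) realizes such a distribution, the normalization $\sum_q \Pb(Q_n^{[k]}=q)=1$ is exactly $N\sum_{w=0}^{K-1}\binom{K-1}{w}(N-1)^w p_w = 1$, and the interference signal $\mathscr{I}$ vanishes iff $w=0$, an event of probability $N p_0$. In that case one of the $N$ answers is zero and can be skipped, saving one symbol, so $D = N p_0 \cdot 1 + (1 - N p_0)\frac{N}{N-1} = \frac{N}{N-1}(1 - p_0)$, establishing (44).

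Next, to derive (43), I would group the patterns at server $n$ by their total weight $W$ and by whether the $k$-th coordinate is zero; using the identities $K\binom{K-1}{W} = (K-W)\binom{K}{W}$ and $K\binom{K-1}{W-1} = W\binom{K}{W}$ yields
\begin{align*}
H(Q_n\mid M) = -\frac{1}{K}\sum_{W=0}^{K}\binom{K}{W}(N-1)^W [(K-W)p_W\log p_W + W p_{W-1}\log p_{W-1}].
\end{align*}
Averaging $\Pb(Q_n^{[k]}=q)$ over a uniform $M$ gives $\Pb(Q_n=q) = [(K-W)p_W + W p_{W-1}]/K$ for every $q$ of weight $W$, so $H(Q_n)$ is the analogous sum with $p_W$ replaced by this averaged quantity; subtracting reproduces (43). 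To minimize $D$ subject to $I(M;Q_n)\le \rho$ and the normalization, I would form the Lagrangian $\Ls = D + \lambda(I(M;Q_n) - \rho) + \mu\bigl(\sum_w\binom{K-1}{w}(N-1)^w p_w - 1/N\bigr)$ and set $\partial\Ls/\partial p_w = 0$. Since only the $W=w$ and $W=w+1$ terms of the $\binom{K}{W}$-sum in (43) involve $p_w$, the stationarity condition is a two-term identity; substituting $x_w = p_{w-1}/p_w$ and iterating from $w=K-1$ down to $w=1$ reproduces the backward recursion (40), with $x_{K-1}\in[1,\infty]$ serving as the free parameter that sweeps out the Pareto frontier.

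Finally, the clean-download code, in which the user draws a server $S$ uniformly from $[1:N]$ and has it return $W_M$ directly, produces $\Pb(Q_n=\emptyset)=(N-1)/N$ and $\Pb(Q_n=m)=1/(NK)$ at every server; a direct computation yields $I(M;Q_n) = (1/N)\log K$ and $D=1$. Every point in $\text{conv}(\hat{\Gc}_{\text{MI}}\cup\{(\log K/N, 1)\})$ is then achievable by choosing, independently on each round, the biased TSC code with probability $\alpha$ and the clean download with probability $1-\alpha$; because the time-sharing index is independent of $M$ and revealed to the server as part of the query, both $D$ and $I(M;Q_n)$ are linear in $\alpha$. The main obstacle I anticipate is the Lagrangian step: one must verify that the $x_w$'s produced by the recursion are componentwise at least $1$ so that $\vec{p}$ is a valid probability vector, and that $x_{K-1}\in[1,\infty]$ indeed traces out the entire Pareto boundary of $\hat{\Gc}_{\text{MI}}$.
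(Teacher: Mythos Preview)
Your proposal is correct and follows essentially the same route as the paper: derive the download and mutual-information expressions \eqref{eq:pdpara}--\eqref{eq:rhopara} for the biased TSC family, use a Lagrangian/KKT argument to recover the backward recursion \eqref{eq:recursive2} for the optimal $\vec{p}$, verify that the clean single-server download attains $(\tfrac{\log K}{N},1)$, and then convexify. The only cosmetic differences are that the paper fixes $p_0$ and minimizes $I(Q_n;M)$ (so $p_0$ is not a Lagrangian variable), whereas you minimize $D$ subject to the leakage constraint, and the paper obtains the convex hull by citing the convexity of $D_{\text{MI}}(\rho)$ from \cite{Lin2021} rather than your explicit time-sharing argument; both pairs of choices are equivalent and your self-identified gap (checking $x_w\ge 1$) is likewise left as a remark in the paper.
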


\begin{figure}[tb]
  \begin{subfigure}[b]{0.475\columnwidth}
    \includegraphics[scale = 0.3]{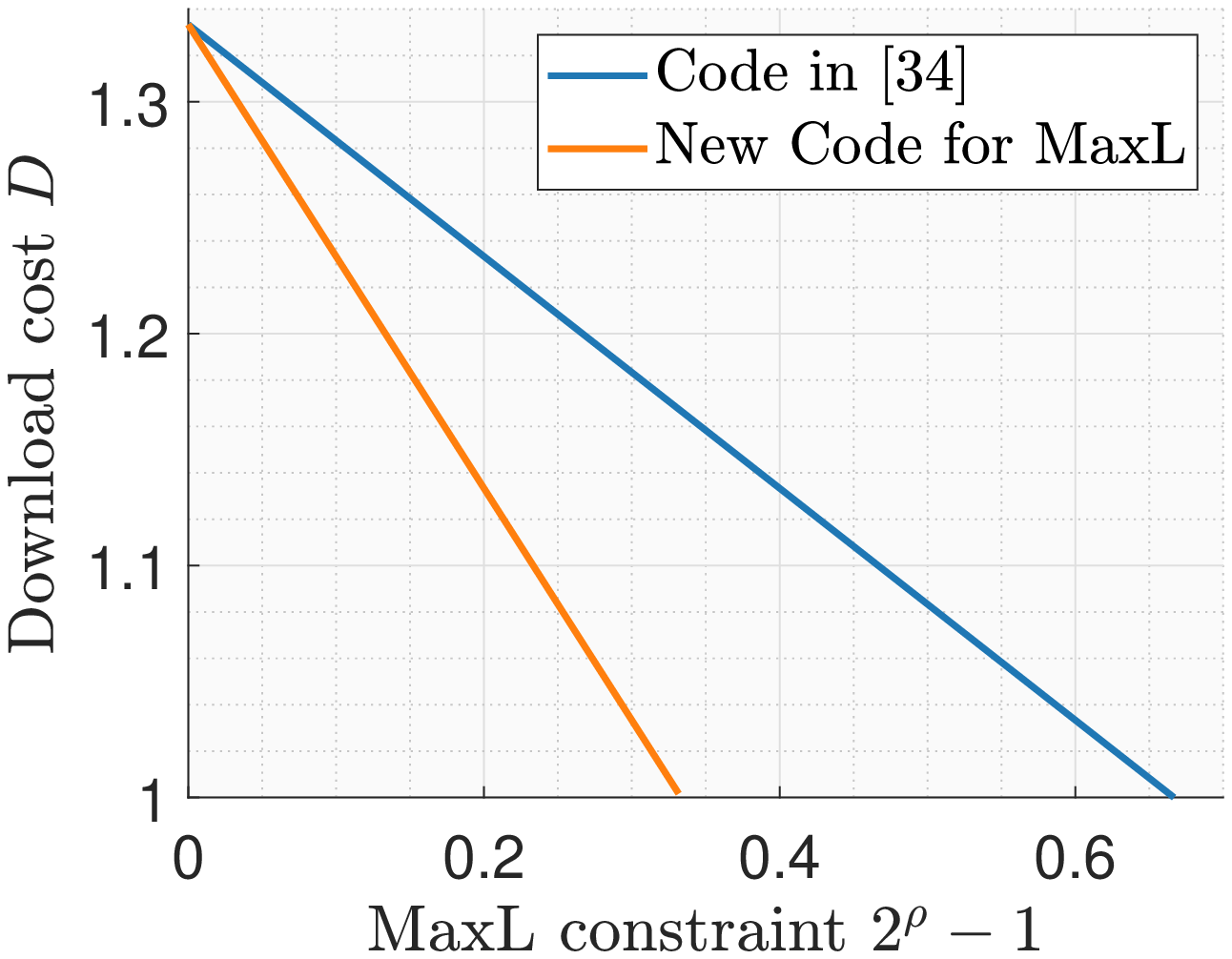}
    \caption{Maximal leakage metric.}
    \label{fig:dmi}
  \end{subfigure}
  \begin{subfigure}[b]{0.475\columnwidth}
    \includegraphics[scale = 0.3]{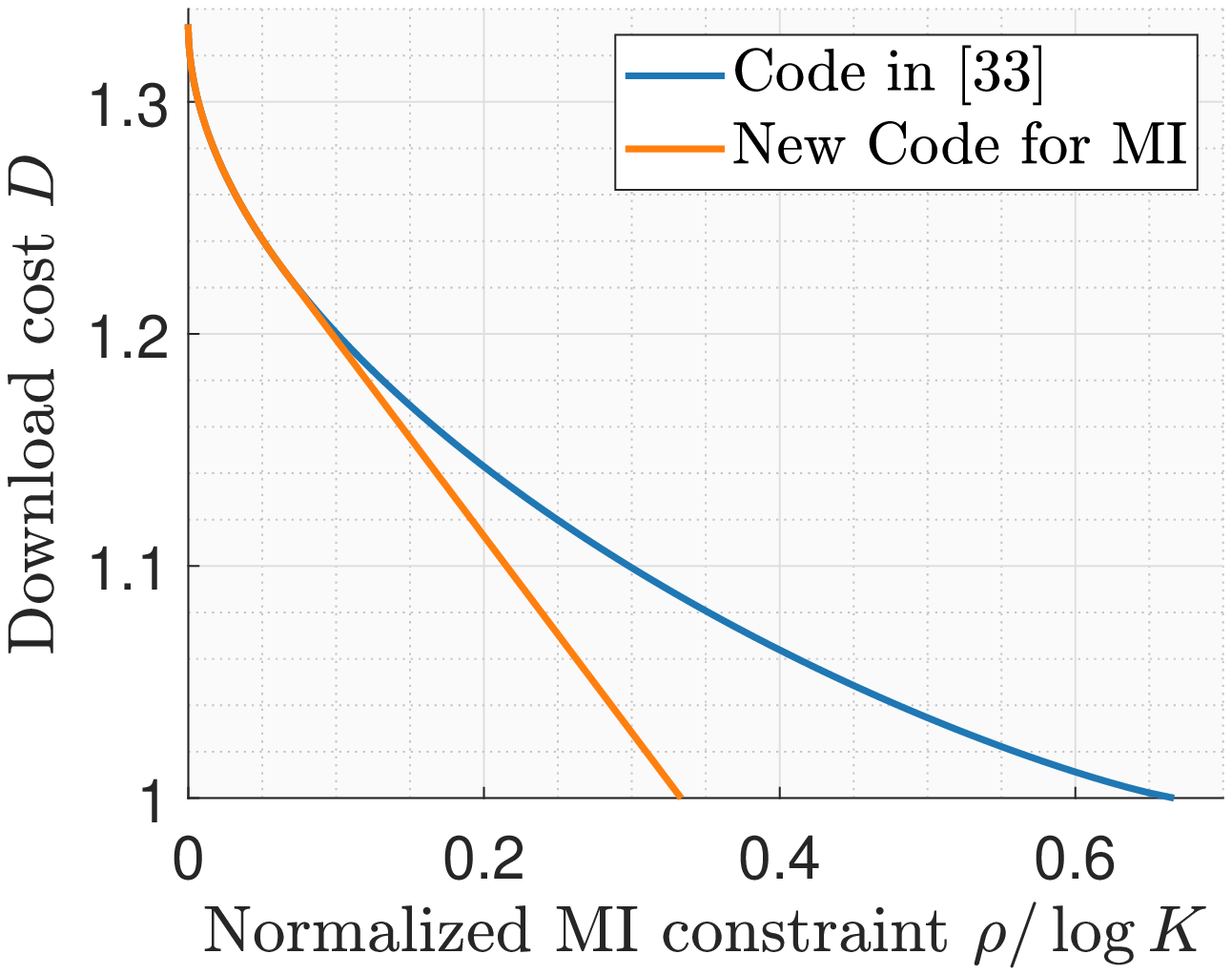}
    \caption{Mutual information metric.}
    \label{fig:dmi}
  \end{subfigure}
  \caption{Numerical comparisons between the proposed code and previously best known code under different privacy metrics when $N = 3,K = 2$. \label{fig:rhoD}}
\end{figure}

The performance of the proposed code is illustrated in Fig. \ref{fig:rhoD}. In both cases, the new extreme point of minimum download provides a new anchor point for the $(\rho,D)$ tradeoff. In essence, the new achievable regions can be obtained by proper probabilistic sharing of the existing code with the clean retrieval pattern for this new extreme point (see details in the next section), in the correct coordinate for the two metrics. The sharing structure is more sophisticated for the mutual information leakage case, and it can be seen that when $\rho$ is below a threshold, the new clean download pattern is in fact not effective, meaning it is not utilized during retrieval.

\section{A New Code Construction}\label{sec:code}
\label{sec:forward}

We first give an example to illustrate the proposed code based on probabilistic sharing, then present the general construction. 

\subsection{An Illustrative Example}

Consider the case with $K=2$ messages and $N=3$ servers. The message length is $L=N-1=2$, and we write $W_1 = (a_1,a_2)$, $W_2 = (b_1,b_2)$. 
The queries and answers are given in Table \ref{tab:dd}. The queries in the top three rows of the two tables directly request the full message from a single server denoted by $\#_1$ and $\#_2$, while the remaining nine rows are essentially the TSC code with different probabilities for the queries, assigned according to their interference signals. Note that the interference signal is controlled by the first-$(K-1)$ entries $F^*_{1:K-1}$ of the random key $F^*$. For any $F \in \Fc^*$, denote $|F|$ as the \emph{size} of the interference corresponding to random key $F$, which is also the hamming weight of $F^*_{1:K-1}$. 
In this example $|F|$ can only be $0$ or $1$.

\setlength\tabcolsep{3pt}
\begin{table}[tb]
    \caption{Proposed code for $N=3,K=2$}
    \label{tab:dd}
    \begin{subtable}[h]{0.5\textwidth}
        \centering
        \caption{Retrieval of $W_1$}
        \label{tab:dd1}
        \begin{tabular}{ |c|c|c|c|c|c|c|c|c| }
        \hline
        \rule{0pt}{2.1ex}\multirow{2}{*}{\rule{0pt}{2.5ex}{$\Pb_F(F)$}} & \multirow{2}{*}{\rule{0pt}{2.5ex}{$F$} } & \multicolumn{2}{|c|}{\textbf{Server $1$}} & \multicolumn{2}{|c|}{\textbf{Server $2$}} & \multicolumn{2}{|c|}{\textbf{Server $3$}}\\
        \cline{3-8} 
        \rule{0pt}{2.8ex}& & $Q^{[1]}_1$ & $A_1$ & $Q^{[1]}_2$ & $A_2$ & $Q^{[1]}_3$ & $A_3$ \\[0.25ex]
        \hhline{|=|=|=|=|=|=|=|=|}
        \rule{0pt}{2.25ex}$p'_0$ & ${1}$ & \#$_1$ & $a_1,a_2 $ & $\underline{00}$ & $\varnothing$ & $\underline{00}$ & $\varnothing$ \\ [0.25ex] 
        \hline
        \rule{0pt}{2.25ex}$p'_0$ & ${2}$ & $\underline{00}$ & $\varnothing$ & \#$_1$ & $a_1,a_2 $ & $\underline{00}$ & $\varnothing$ \\ [0.25ex] 
        \hline
        \rule{0pt}{2.25ex}$p'_0$ & ${3}$ & $\underline{00}$ & $\varnothing$ & $\underline{00}$ & $\varnothing$ & \#$_1$ & $a_1,a_2 $ \\ [0.25ex] 
        \hline\hline
        \rule{0pt}{2.25ex}$p_0$ & 00 & $10$ & $a_1$ & $20$ & $a_2 $ & $\underline{00}$ & $\varnothing$ \\ [0.25ex] 
        \hline
        \rule{0pt}{2.25ex}$p_0$ & 10 & $20$ & $a_2 $ & $\underline{00}$ & $\varnothing$ & $10$ & $a_1$ \\ [0.25ex]  
        \hline
        \rule{0pt}{2.25ex}$p_0$ & 20 & $\underline{00}$ & $\varnothing$ & $10$ & $a_1$ & $20$ & $a_2 $ \\ [0.25ex]  
        \hline
        \rule{0pt}{2.25ex}$p_1$ & 01 & $11$ & $a_1 \oplus b_1$ & $21$ & $a_2 \oplus b_1$ & $01$ & $b_1$ \\ [0.25ex] 
        \hline
        \rule{0pt}{2.25ex}$p_1$ & 11 & $21$ & $a_2 \oplus b_1$ & $01$ & $b_1$ & $11$ & $a_1 \oplus b_1$ \\ [0.25ex] 
        \hline
        \rule{0pt}{2.25ex}$p_1$ & 21 & $01$ & $b_1$ & $11$ & $a_1 \oplus b_1$ & $21$ & $a_2 \oplus b_1$ \\ [0.25ex] 
        \hline
        \rule{0pt}{2.25ex}$p_1$ & 02 & $12$ & $a_1 \oplus b_2$ & $22$ & $a_2 \oplus b_2$ & $02$ & $b_2$ \\ [0.25ex] 
        \hline
        \rule{0pt}{2.25ex}$p_1$ & 12 & $22$ & $a_2 \oplus b_2$ & $02$ & $b_2$ & $12$ & $a_1 \oplus b_2$ \\ [0.25ex] 
        \hline
        \rule{0pt}{2.25ex}$p_1$ & 22 & $02$ & $b_2$ & $12$ & $a_1 \oplus b_2$ & $22$ & $a_2 \oplus b_2$ \\ [0.25ex] 
        \hline
        \end{tabular}
    \end{subtable}
    \newline
    \vspace*{1em}
    \newline
    \begin{subtable}[h]{0.5\textwidth}
        \centering
        \caption{Retrieval of $W_2$}
        \label{tab:dd2}
        \begin{tabular}{ |c|c|c|c|c|c|c|c|c| }
        \hline
        \rule{0pt}{2.1ex}\multirow{2}{*}{\rule{0pt}{2.5ex}{$\Pb_F(F)$}} & \multirow{2}{*}{\rule{0pt}{2.5ex}{$F$} } & \multicolumn{2}{|c|}{\textbf{Server $1$}} & \multicolumn{2}{|c|}{\textbf{Server $2$}} & \multicolumn{2}{|c|}{\textbf{Server $3$}}\\
        \cline{3-8} 
        \rule{0pt}{2.8ex}& & $Q^{[2]}_1$ & $A_1$ & $Q^{[2]}_2$ & $A_2$ & $Q^{[2]}_3$ & $A_3$ \\[0.25ex]
        \hhline{|=|=|=|=|=|=|=|=|}
        \rule{0pt}{2.25ex}$p'_0$ & $ {1}$ & $\text{\#}_2$ & $b_1,b_2 $ & $\underline{00}$ & $\varnothing$ & $\underline{00}$ & $\varnothing$ \\ [0.25ex] 
        \hline
        \rule{0pt}{2.25ex}$p'_0$ & $ {2}$ & $\underline{00}$ & $\varnothing$ & $\text{\#}_2$ & $b_1,b_2 $ & $\underline{00}$ & $\varnothing$ \\ [0.25ex] 
        \hline
        \rule{0pt}{2.25ex}$p'_0$ & $ {3}$ & $\underline{00}$ & $\varnothing$ & $\underline{00}$ & $\varnothing$ & $\text{\#}_2$ & $b_1,b_2 $ \\ [0.25ex] 
        \hline\hline
        \rule{0pt}{2.25ex}$p_0$ & 00 & $01$ & $b_1$ & $02$ & $b_2 $ & $\underline{00}$ & $\varnothing$ \\ [0.25ex]  
        \hline
        \rule{0pt}{2.25ex}$p_0$ & 01 & $02$ & $b_2 $ & $\underline{00}$ & $\varnothing$ & $01$ & $b_1$ \\ [0.25ex]  
        \hline
        \rule{0pt}{2.25ex}$p_0$ & 02 & $\underline{00}$ & $\varnothing$ & $01$ & $b_1$ & $02$ & $b_2 $ \\ [0.25ex]  
        \hline
        \rule{0pt}{2.25ex}$p_1$ & 10 & $11$ & $a_1 \oplus b_1 $  & $12$ & $a_1 \oplus b_2 $ & $10$ & $a_1$\\ [0.25ex] 
        \hline
        \rule{0pt}{2.25ex}$p_1$ & 11 & $12$ & $a_1 \oplus b_2 $ & $10$ & $a_1$ & $11$ & $a_1 \oplus b_1 $ \\ [0.25ex] 
        \hline
        \rule{0pt}{2.25ex}$p_1$ & 12 & $10$ & $a_1$ & $11$ & $a_1 \oplus b_1 $ & $12$ & $a_1 \oplus b_2 $ \\ [0.25ex] 
        \hline
        \rule{0pt}{2.25ex}$p_1$ & 20 & $21$ & $a_2 \oplus b_1$ & $22$ & $a_2 \oplus b_2$ & $20$ & $a_2$ \\ [0.25ex] 
        \hline
        \rule{0pt}{2.25ex}$p_1$ & 21 & $22$ & $a_2 \oplus b_2$ & $20$ & $a_2$ & $21$ & $a_2 \oplus b_1$ \\ [0.25ex] 
        \hline
        \rule{0pt}{2.25ex}$p_1$ & 22 & $20$ & $a_2$ & $21$ & $a_2 \oplus b_1$ & $22$ & $a_2 \oplus b_2$ \\ [0.25ex] 
        \hline
        \end{tabular}
    \end{subtable}
\end{table}

We have omitted the dummy symbols  $a_0$ and $b_0$ for conciseness. 
The random key $F$ has a total of $12$ possible realizations, with the probability parametrized by $(p'_0, p_0, p_1)$, where $p'_0$ is the probability of direct download from a single given server, $p_0$ is that of the interference having hamming weight $0$, $p_1$ that of the interference having haming weight $1$.

\subsection{General Code Construction}
For general W-PIR with parameter $(N,K)$, we set $L=N-1$. The random key $F$ is generated from set $\Fc$ with a probability distribution $\Pb_F(F)$, where $\Fc = \Fc^* \cup [1:N] = [0:N-1]^{K} \cup  [1:N]$, and
\begin{align}
    \Pb_F(F) = \begin{cases} p'_0, & \forall F \in [1:N]\\
    p_w, & \forall F \in \Fc^*, |F| = w,~ w \in [0:K-1]
    \end{cases},
\end{align}
which needs to satisfy
\begin{align}
    N p'_0 + N\sum_{w=0}^{K-1} \binom{K-1}{w}(N-1)^w  p_w = 1.
\end{align}
The query $Q_n^{[k]}$ to server-$n$ is produced as:
\begin{align}
    Q_n^{[k]} = \begin{cases} \text{\#}_k, & F = n\\ 
    \underline{0_K}, & \forall F \in [1:N],~ F \neq n\\ 
    \phi_n^*(k,F), & \forall F \in \Fc^*\end{cases},
\end{align}
where $\underline{0_K}$ is the length-$K$ all-zero vector. The answer $A_n^{[k]}$ from server-$n$ is generated as
\begin{align}
    A_n^{[k]} = \begin{cases} W_k, & q = \text{\#}_k\\ 
    \varphi^*(q,W_{1:K}), & \forall q \in \Qc^*\end{cases}.
\end{align}
The correctness of the code is obvious, and the download cost $D$ can be simply computed as 
\begin{align}
    p_d &\triangleq N (p'_0 + p_0)\label{eq:pd}\\
    D & = p_d + \frac{N}{N-1}(1-p_d),
\end{align}
where $p_d$ is the overall probability of using a direct download. We defer the analysis of the privacy for the two metrics to the next subsection.

\section{Code Optimization and Performance Analysis}

We have provided the new code construction in a general form in the previous section, however, without optimizing the probability distribution. In this section, we optimize 
the probability distributions for the two leakage constraints, respectively.

\subsection{Optimizing for Maximal Leakage}

Since $p_d$ is directly related to $D$ in the proposed code, setting $p_d$ is equivalent to specifying a target download cost $D$ in this code. Therefore, the constrained minimization problem can be written as follows: 
\begin{align}
    \text{Minimize:}~~ & \Lc(M\rightarrow Q_n^{[M]}),\\
    \text{Variables:}~~ & \pv = (p_0,p_1,\dots,p_{K-1}),\\
    \text{Subject to:}~~ & -p_w \leq 0, \forall w\in 0:K-1,\\
    & N p_0 - p_d\leq 0,\\
    & \begin{aligned}
    N\sum_{w=0}^{K-1}\binom{K-1}{w}(&N-1)^w p_w\\
    &+ (p_d - N p_0) -1 = 0.\label{eq:probsum}
    \end{aligned}
\end{align}
This optimization problem can indeed be solved (see a proof of a similar nature in \cite{Zhou2020a}), for which the solution is 
\begin{align}
    p'_0 &= \min\left(\frac{1}{N},\frac{2^\rho-1}{K-1}\right),\\
    p_w &= \frac{1-Np'_0}{N^K},\quad w \in 0:K-1,
% p_1=p_2=\ldots=p_{K-1},
\end{align}
and the leakage can thus be found as
\begin{align}
    & \ml{n} = \log_2\sum_{q_n\in\Qc_n}\max_{j\in[1:K]}\Pb_{Q^{[j]}_n}(q_n)\notag\\
    & = \log_2 \left[ \left(\sum_{q_n = \underline{0_K} } + \sum_{\substack{q_n \in \Qc^*, \\ q_n \neq \underline{0_K}}} +\sum_{\substack{\text{\#}_k: \\ k\in[1:K]}}\right)\max_{j\in[1:K]}\Pb_{Q^{[j]}_n}(q_n) \right]\notag\\
    & = \log_2 \Bigg\{ \left[(N-1)p'_0 + \frac{1-Np'_0}{N^K}\right]\notag\\
    &\hspace{10em}+ (N^K-1)\frac{1-Np'_0}{N^K} + Kp'_0 \Bigg\}\notag\\ 
    & = \log_2[1+(K-1)p'_0] \notag \\
    & = \min\left(\rho,\log_2[1+(K-1)/N]\right),
\end{align}
which is exactly that given in Theorem \ref{thm:achievableML}.

\subsection{Optimizing for the Mutual Information Leakage}

The optimization problem is very similar to that in the maximal leakage case except that the objective function is the mutual information leakage. We first analytically solve for the optimal probability distribution with the new retrieval pattern excluded, i.e. $p_d = Np_0$, and then prove Theorem \ref{thm:achievableMI} using the fact that the $D_{\text{MI}}(\rho)$ is convex\cite{Lin2021}. The optimization problem is thus formulated as
\begin{align}
    \text{Minimize:}~~ & I(Q_n;M),\\
    \text{Variables:}~~ & (p_1,p_2,\dots,p_{K-1}),\\
    \text{Subject to:}~~ & -p_w \leq 0, \forall w\in 1:K-1,\\
    & \begin{aligned}
    N\sum_{w=0}^{K-1}&\binom{K-1}{w}(N-1)^w p_w -1 = 0.\label{eq:probsum}
    \end{aligned}
\end{align}
The download cost $D$ here is directly related to $p_d = Np_0$, and setting a positive $p_0$ value is equivalent to specifying $D$.

The following proposition shows that the vector $\pv$ gives an optimal solution, when the new retrieval pattern is not used.
\begin{proposition}\label{lem:mioptimal}
For each $x_{K-1}\in [1,\infty)$, the vector $\pv$ given by (\ref{eq:recursive2})-(\ref{eq:pvdef2}) is optimal for the optimization problem given above.
\end{proposition}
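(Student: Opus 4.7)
The plan is to apply the Karush--Kuhn--Tucker (KKT) conditions to this convex program and show that the stationarity equations unfold into the recursion (\ref{eq:recursive2}).

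First, I would compute $I(Q_n;M)=H(Q_n)-H(Q_n|M)$ explicitly in terms of $p_0,\ldots,p_{K-1}$. By the symmetry of the construction, for fixed $M=k$, $\Pb(Q_n=q\mid M=k)=p_w$ where $w$ is the Hamming weight of $q$ with its $k$-th coordinate deleted; counting the $N\cmb{K-1}{w}(N-1)^w$ such queries gives $H(Q_n|M)=-N\sum_w\cmb{K-1}{w}(N-1)^w p_w\log p_w$, independent of $k$. Averaging over $k$ (uniform on $[1:K]$) yields $\Pb(Q_n=q)=[(K-w)p_w+wp_{w-1}]/K$ for a query of total Hamming weight $w$, and summing over the $\cmb{K}{w}(N-1)^w$ such queries produces $H(Q_n)$. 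The combination $H(Q_n)-H(Q_n|M)$ reproduces the expression inside the braces of (\ref{eq:rhopara}).

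Second, I would form the Lagrangian $\Lc=I(Q_n;M)+\lambda\bigl[N\sum_{w=0}^{K-1}\cmb{K-1}{w}(N-1)^w p_w-1\bigr]$, set $\partial\Lc/\partial p_w=0$ for $w\in[1:K-1]$ (with $p_0$ held fixed by the prescribed download cost), and tentatively drop the inequalities $p_w\geq 0$, which will be verified inactive. Since $p_w$ appears in both the $w$-th and $(w+1)$-th marginal-entropy terms of $H(Q_n)$, the stationarity condition couples adjacent weight classes. Defining $A_w:=[(K-w)+wx_w]/K$ with $x_w:=p_{w-1}/p_w$, and cancelling the constant $\log K+1$ contributions, the KKT equation for $w\in[1:K-2]$ collapses to
\begin{equation*}
\log A_w+(N-1)\log A_{w+1}=(N-1)\log x_{w+1}+\log A_{K-1},
\end{equation*}
after the boundary case $w=K-1$ pins down the Lagrange multiplier as $-\lambda N=-\log A_{K-1}$. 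Unrolling this first-order linear recursion in $\log A_w$ with characteristic factor $(1-N)$, starting from $\log A_{K-1}$ and pulling in one $\log x_{w+j}$ term per step, yields exactly (\ref{eq:recursive2}).

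The main obstacle will be twofold. First, the bookkeeping of the unfolding must be executed carefully so that the geometric sum $\sum_{j=0}^{i-1}(1-N)^j\log A_{K-1}$ and the correction $-\sum_{j=1}^{i-1}(1-N)^j\log x_{K-i+j}$ emerge with the stated signs and indices; an induction on $i$ starting from the verified cases $i=1,2$ suffices. Second, one must justify global optimality and feasibility: convexity of $I(Q_n;M)$ in the joint distribution $\Pb_{Q_n,M}$ combined with its affine parametrization by $(p_0,\ldots,p_{K-1})$ ensures that any interior KKT point is a global minimum; positivity $p_w>0$ (equivalently $x_j\geq 1$) then follows inductively from the recursion for any $x_{K-1}\in[1,\infty)$, consistent with the intuition that at the optimum heavier-interference queries are used less frequently.
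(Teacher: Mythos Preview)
Your proposal is correct and follows essentially the same route as the paper: write the Lagrangian, introduce the ratios $x_w=p_{w-1}/p_w$ and the quantities $y_w=\log A_w$, set the stationarity conditions (with all inequality multipliers zero and $\nu=y_{K-1}/N$), and observe that the resulting two-term recurrence in $y_w$ unrolls into (\ref{eq:recursive2}). If anything, you are more explicit than the paper, which simply states that ``it is straightforward to verify'' the recursion and does not spell out the convexity argument you sketch for global optimality.
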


\begin{proof}
We first write the Lagrangian of the problem, 
\begin{align}
    \Ls = \hat I(\pv&) - \sum_{w=1}^{K-1}\lambda_w p_w \notag\\
    &+ \nu \left[\sum_{w=0}^{K-1}\binom{K-1}{w}(N-1)^w N p_w -1 \right],
\end{align}
where $\hat I(\pv)$ is defined as 
\begin{align}
        &\hat I(\pv)  = \frac{1}{K}\sum_{w=0}^{K}\cmb{K}{w}(N-1)^w \notag\\
    &\Big\{ wp_{w-1} \log p_{w-1} + (K-w)p_w \log p_w \notag\\
    & - [wp_{w-1} + (K-w)p_w] \log \frac{wp_{w-1} + (K-w)p_w}{K}\Big\}.
\end{align}
The KKT condition can be explicitly derived as follows: the partial derivatives of $\Ls$ w.r.t $p_w$ are,
\begin{align}
    &\begin{aligned}
        &\pdv{\Ls}{p_{w}} = \cmb{K-1}{w} (N-1)^w \big[- y_w - (N-1) y_{w+1}\\
        & + (N-1)\log x_{w+1}  + N \nu\big]-\lambda_w,~~w\in 1:K-2,
    \end{aligned}\\
    &\begin{aligned}
        &\pdv{\Ls}{p_{K-1}} =(N-1)^{K-1} \big[- y_{K-1} + N \nu\big]-\lambda_{K-1},
    \end{aligned}
\end{align}
where we have introduced the two new sets of variables:
\begin{align}
    x_w &\triangleq p_{w-1}/p_{w},\\
    y_w &\triangleq \log \frac{w x_w + K- w}{K}.
\end{align}
It is straightforward to verify that $x_w$ and $y_w$ satisfying \eqref{eq:recursive2}, with $y_w$'s properly eliminated, which along with the following dual variable assignments
\begin{align}
   \lambda_w &= 0,~w\in 1:K-1,\label{eq:lambda} \\
   \nu &= y_{K-1}/N,\label{eq:nu}
\end{align}
render the partial derivatives zeros for all $w\in1:K-1$, and moreover satisfy all complementary slackness requirement. These assignments are thus 
a solution to the primal optimization problem. The mutual information leakage $I(Q_n;M)$ and the download cost $D$ with this solution is exactly as the right hand sides of \eqref{eq:rhopara} and \eqref{eq:pdpara}.
\end{proof}

It is straightforward to show that the new strategy (clean download from any single server) gives the extreme point $(\rho,D) = (\frac{\log K}{N},1)$. Together with the convexity of $D_{\text{MI}}(\rho)$ (see \cite{Lin2021}), Theorem \ref{thm:achievableMI} is now obvious. In fact, for any optimized TSC code with $\pv = \tilde \pv$, probabilistic sharing with the new strategy results in $(\rho,D)$ operating points on the straight line connecting the them, and the resultant code has a positive $p'_0$ and $\pv = (1-Np'_0)\tilde \pv$. 

In Fig. \ref{fig:rhoD}, the tangent point gives the threshold beyond which the new download pattern becomes effective in the sharing solution. It can be shown after some algebra that this occurs at  $x_1 = (K-1)/(K^\frac{N-2}{N-1}-1)$. 

\section{Conclusion}
\label{sec:conclusion}
We studied the the problem of weakly private information retrieval, and proposed a new code construction based on a simple yet critical observation on the minimum download extreme case. The optimizing query pattern probability distributions are provided for the maximal leakage metric and the mutual information leakage metrics, resulting in strict improvements in both case. The new inner bounds do not yet match the known outer bounds in the literature, and we are currently working on reducing this gap.

\bibliographystyle{IEEEtran}
%\bibliography{IEEEabrv,library}

% Generated by IEEEtran.bst, version: 1.14 (2015/08/26)

\end{document}